\definecolor[named]{urlblue}{cmyk}{1,0.58,0,0.21}
\newtheorem{theorem}{Theorem}[section]
\newtheorem{lemma}[theorem]{Lemma}
\newtheorem{corollary}[theorem]{Corollary}
\newtheorem{observation}[theorem]{Observation}
\newtheorem{claim}[theorem]{Claim}
\theoremstyle{definition}
\theoremstyle{remark}
\newenvironment{proofofclaim}{\begin{proof}[Proof of Claim]}{\end{proof}}
\renewcommand{\geq}{\geqslant}
\renewcommand{\leq}{\leqslant}
\newcommand\conncomp{\text{cc}}
\DeclareMathOperator{\tw}{tw}
\DeclareMathOperator{\td}{td}
\DeclareMathOperator{\vc}{vc}
\newcommand{\orcid}[1]{\href{https://orcid.org/#1}{\includegraphics[height=1.8ex]{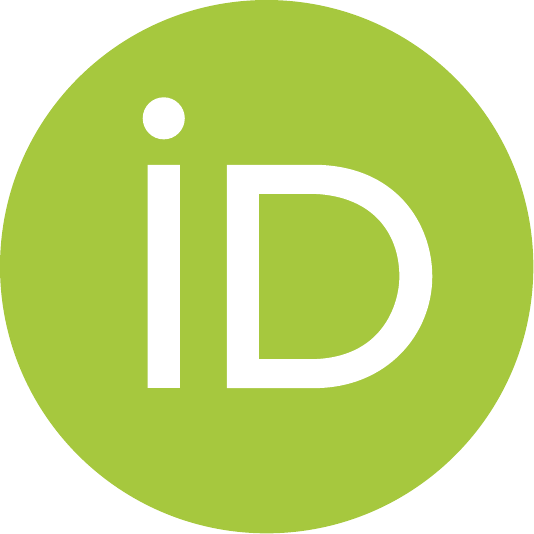}}}
\title{Treedepth Inapproximability and \\Exponential ETH Lower Bound}
\author{\'{E}douard Bonnet \orcid{0000-0002-1653-5822}\\
  CNRS, ENS de Lyon, Université Claude Bernard Lyon 1, LIP UMR 5668, France
\and
Daniel Neuen \orcid{0000-0002-4940-0318}\\
Max Planck Institute for Informatics, Saarland Informatics Campus, Germany
\and
Marek Soko{\l}owski \orcid{0000-0001-8309-0141}\\
Max Planck Institute for Informatics, Saarland Informatics Campus, Germany}
\date{}
\begin{document}

\maketitle

\begin{abstract}
  Treedepth is a~central parameter to algorithmic graph theory.
  The current state-of-the-art in computing and approximating treedepth consists of a~$2^{O(k^2)} n$-time exact algorithm and a~polynomial-time $O(\text{OPT} \log^{3/2} \text{OPT})$-approximation algorithm, where the former algorithm returns an \emph{elimination forest} of height~$k$ (witnessing that treedepth is at~most~$k$) for the $n$-vertex input graph $G$, or correctly reports that $G$ has treedepth larger than~$k$, and $\text{OPT}$ is the actual value of the treedepth.
  On the complexity side, exactly computing treedepth is \textsf{NP}-complete, but the known reductions do not rule out a~polynomial-time approximation scheme (PTAS), and under the Exponential Time Hypothesis (ETH) only exclude a~running time of~$2^{o(\sqrt n)}$ for exact algorithms.

  We show that 1.0003-approximating \textsc{Treedepth} is \textsf{NP}-hard, and that exactly computing the treedepth of an $n$-vertex graph requires time $2^{\Omega(n)}$, unless the ETH fails.
  We further derive that there exist absolute constants $\delta, c > 0$ such that any $(1+\delta)$-approximation algorithm requires time $2^{\Omega(n / \log^c n)}$.
  We do so via a~simple direct reduction from \textsc{Satisfiability} to \textsc{Treedepth}, inspired by a~reduction recently designed for \textsc{Treewidth} [STOC~'25].
\end{abstract}

\section{Introduction}

The treedepth $\td(G)$ of a~graph $G$ is the least integer~$k$ such that there is a~rooted forest~$F$ of height~$k$ with same vertex set as~$G$ such that every edge of~$G$ is between two nodes in ancestor--descendant relationship in~$F$.
Treedepth and treewidth, $\tw$, are related by the inequalities $\tw(G)+1 \leqslant \td(G) \leqslant \tw(G) \cdot (1 + \log n)$, for every $n$-vertex graph $G$.
An $n$-vertex path has treewidth (even pathwidth)~1, but treedepth $\Theta(\log n)$.

Treedepth comes into play in various contexts.
Notably, in the sparsity theory initiated by Ne\v set\v ril and Ossona de Mendez~\cite{NesetrilM12}, treedepth provides a~characterization of classes of bounded expansion (roughly speaking, classes excluding, as subgraphs, short subdivisions of graphs of large average degree).
Graph classes with bounded expansion are exactly those with so-called \emph{low treedepth covers} (some form of cover by graphs of bounded treedepth)~\cite{Nesetril08}.

Graphs of bounded treewidth famously lend themselves to fixed-parameter tractable (FPT) algorithms for various \mbox{\textsf{NP}-hard} problems, with parameter the width of the computed (or given) tree-decompositions, by performing dynamic programming over these decompositions (see, e.g., \cite[Chapter 7]{CyganFKLMPPS15}).
However, this method consumes essentially as much space as it takes time; in particular, most of these algorithms for \mbox{\textsf{NP}-hard} problems take exponential space in the treewidth.
Bounded treedepth, in contrast, often allows for parameterized algorithms with comparable running time but using only polynomial space; see~for instance \cite{Furer17,NederlofPSW23,PilipczukS21,PilipczukW18}.
Other uses of treedepth can be found in formula complexity~\cite{Kush23}, distributed model checking~\cite{Fomin24}, product structure theory~\cite{Dujmovic24}, relation to polynomial minors \cite{KawarabayashiR21,Hatzel24}, graph circumference~\cite{BrianskiJMMSS23} etc.
We now survey the current state of the art on computing and approximating the treedepth of an input graph---the topic of the current paper.
Note that \textsc{Treedepth} was the selected problem for the 2020 edition of the PACE challenge~\cite{Kowalik20}.

The decision version of \textsc{Treedepth} is \textsf{NP}-complete~\cite{Pothen88,BodlaenderDJKKMT98}.
There is an easy algorithm that computes the treedepth of an $n$-vertex graph in time $O(2^n \cdot n)$, and a~slightly faster exponential algorithm computing the decomposition in time $O(1.9602^n)$~\cite{FominGP15}.
An FPT (exact) algorithm in $2^{O(k^2)} n$~time has been established, first needing exponential space~\cite{ReidlRVS14}, and later improved to only use polynomial space~\cite{NadaraPS22}.
More precisely, these algorithms run in time $2^{O(\td(G) \tw(G))} n$ on an $n$-vertex graph $G$.
An outstanding open question is whether running time $2^{O(k)} n^{O(1)}$ can be obtained.
Notably, \textsc{Treewidth} admits constant-approximation algorithms in this running time.
In contrast to treewidth, such a~result remains elusive for \textsc{Treedepth}.
On the front of polynomial-time approximation algorithms, the best factor that currently can be achieved is $O(\tw(G) \log^{3/2} \tw(G))$~\cite{CzerwinskiNP19}, hence $O(\td(G) \log^{3/2} \td(G))$.

The previously known hardness results~\cite{Pothen88,BodlaenderDJKKMT98} are rather unsatisfactory:
First of all, they did not rule out a~polynomial-time approximation scheme (PTAS) for \textsc{Treedepth}.
Furthermore, by following the reductions presented in both papers, we can only infer a~$2^{\Omega(\sqrt{n})}$ lower bound for the exact variant of the problem under the Exponential Time Hypothesis (ETH).\footnote{The assumption that there is some $\lambda > 1$ such that $n$-variable \textsc{3-SAT} requires time $\Omega(\lambda^n)$.}
Naturally, this excludes any $2^{o(\sqrt{k})} n^{O(1)}$-time parameterized exact algorithm for the problem under ETH.

\subparagraph{Our Results.}
In this work, we design a~simple linear reduction from \textsc{Satisfiability} to \textsc{Treedepth}.
It draws inspiration from a~recent similar result for \textsc{Treewidth} \cite{tw-hardness}, and also relies on the hardness of \textsc{Vertex Cover} (i.e., the task of finding a~smallest vertex subset hitting every edge) on tripartite graphs.
Our reduction yields the following inapproximability results and computational lower bounds for \textsc{Treedepth}.

\begin{theorem}
  \label{thm:td-no-ptas}
  It is \textsf{NP}-hard to 1.0003-approximate \textsc{Treedepth}.
\end{theorem}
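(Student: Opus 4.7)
The plan is to design a polynomial-time gadget reduction from a gap version of \textsc{Vertex Cover} on tripartite graphs (APX-hard by standard PCP-based arguments) to \textsc{Treedepth}, transforming a $(1+\varepsilon)$-multiplicative gap in $\vc$ into a $1.0003$-multiplicative gap in $\td$. Concretely, I would invoke that, for some absolute constant $\varepsilon > 0$, distinguishing tripartite graphs $G$ with $\vc(G) \leq k$ from those with $\vc(G) \geq (1+\varepsilon) k$ is \textsf{NP}-hard. The introduction presents the reduction as a direct one from \textsc{Satisfiability} inspired by the recent treewidth reduction of~\cite{tw-hardness}; this suggests that the tripartite-VC instance is built implicitly into the construction, with variable gadgets and clause gadgets playing the roles of the sides of the tripartition.

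Given a tripartite graph $G$ on $n$ vertices, I would build an output graph $H$ by attaching to each $v \in V(G)$ a padding gadget with treedepth equal to a controlled value $T$ (for instance, a long path or a carefully chosen subdivided substructure), and by interconnecting these gadgets through the edges of $G$. The calibration target is
\[
  \td(H) \;=\; T + \vc(G) \pm O(1),
\]
with $T$ of the same order of magnitude as $\vc(G)$, chosen so that the multiplicative gap $1+\varepsilon$ in $\vc(G)$ becomes a multiplicative gap of at least $1.0003$ in $\td(H)$. For the upper bound, I would eliminate a minimum vertex cover $S$ of $G$ in the top $\vc(G)$ levels of an elimination forest; the remaining components, each consisting of a single vertex of $V(G) \setminus S$ together with its attached gadget, then contribute at most $T + O(1)$ further depth. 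For the lower bound, I would argue that in any elimination forest of $H$, the ancestors of any sufficiently deep copy of the gadget must include a vertex cover of $G$: otherwise some edge $uv \in E(G)$ would have both endpoints low in the forest, and the gadgets attached to $u$ and $v$ would together force a depth exceeding $T + \vc(G)$.

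The main obstacle is making the lower bound tight. A naive padding would allow a clever elimination order to share depth across adjacent vertices of $G$, eroding the vertex-cover lower bound. Preventing this requires the gadget to carry many vertex-disjoint depth-$T$ witnesses arranged so that no vertex of $V(G) \setminus S$ can absorb the depth of a neighbor without paying the full cost. Once this structural rigidity is in place, the final inapproximability constant $1.0003$ is obtained by balancing $T$ against $\varepsilon \cdot \vc(G)$; the template from the treewidth paper \cite{tw-hardness} should give a concrete way to enforce such rigidity in the treedepth setting as well, and any slack in the ratio $T / \vc(G)$ can be absorbed by choosing the starting VC gap $\varepsilon$ sufficiently large via standard gap amplification.
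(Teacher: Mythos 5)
Your high-level architecture matches the paper's: reduce from a gap version of \textsc{Vertex Cover} on tripartite graphs (obtained, as you guess, by encoding a MAX-SAT instance into a tripartite graph) and pad the graph so that its treedepth equals $\vc(G)$ plus a controlled additive term. However, the two steps that actually carry the proof are missing. First, the lower bound on $\td(H)$ --- which you correctly identify as ``the main obstacle'' --- is exactly the technical heart of the argument, and you defer it to an unspecified ``rigidity'' mechanism. The paper does \emph{not} attach a gadget to every vertex of $G$; it attaches a single clique $K_X$ of size $\ell \geq \vc(G)$ to each of the \emph{three sides} $X$ of the tripartition, plus one apex vertex $z_X$ per clique. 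The lower bound then works as follows: the bottoms $\kappa_A,\kappa_B,\kappa_C$ of the three cliques cannot pairwise lie on a common root-to-leaf path (two cliques of size $\ell \geq \vc(G)$ stacked would already exceed the budget $\vc(G)+\ell$), every $G$-edge forces one endpoint to be a common ancestor of the two relevant $\kappa$'s, and an argument on the three pairwise lowest common ancestors --- which crucially uses that there are exactly three sides --- shows that the $G$-vertices above a single well-chosen $\kappa_X$ form a vertex cover. With $n$ per-vertex gadgets of depth $T \approx \vc(G)$ this argument breaks down (distinct gadget bottoms may legitimately share a root-to-leaf path within the budget), and a per-vertex path gadget of depth $T=\Theta(n)$ would in any case have exponentially many vertices.

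Second, the derivation of the concrete constant $1.0003$ is absent, and your fallback --- ``any slack in the ratio $T/\vc(G)$ can be absorbed by choosing the starting VC gap $\varepsilon$ sufficiently large via standard gap amplification'' --- does not work: the inapproximability gap for \textsc{Vertex Cover} on tripartite graphs cannot be made arbitrarily large, so the additive overhead of the padding directly erodes the final ratio and must be accounted for explicitly. The paper controls this by starting from Berman--Karpinski's explicit gap for MAX-2-SAT with bounded occurrences ($1-\varepsilon$ versus $\frac{251}{252}+\varepsilon$), tracking the exact formula $\td(H(\varphi)) = 6m - m' + 8n + 1$ with $n \leq \frac{2}{3}m$, and computing that the resulting multiplicative gap is $1+\delta$ for any $\delta < \frac{1}{2604}$, which is what yields $1.0003$. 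Without an explicit starting gap and an explicit bound on the padding-to-cover ratio, the stated constant cannot be concluded.
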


In particular, the theorem rules out a~polynomial-time approximation scheme (PTAS) for \textsc{Treedepth} (assuming $\textsf{P} \neq \textsf{NP}$).

\begin{theorem}
  \label{thm:td-eth-lower-bound}
  Assuming ETH, there is some $\varepsilon > 0$ such that the treedepth of an~$n$-vertex graph cannot be computed in time $O(2^{\varepsilon n})$.
\end{theorem}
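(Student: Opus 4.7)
The plan is to derive this bound from a linear-size reduction from \textsc{3-SAT} to \textsc{Treedepth}, via the intermediate problem of \textsc{Vertex Cover} on tripartite graphs, as announced in the introduction. Concretely, I would build a chain of reductions in which each step multiplies the instance size by at most a constant factor, so that the classical $2^{\Omega(n)}$ ETH lower bound for \textsc{3-SAT} with a linear number of clauses propagates unchanged to \textsc{Treedepth}.

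First I would invoke the Sparsification Lemma of Impagliazzo, Paturi, and Zane to reduce ETH to \textsc{3-SAT} instances with $m = O(n)$ clauses, for which no $2^{o(n)}$-time algorithm exists. A standard gadget (variable gadgets of bounded size, clause triangles, and incidence edges) then transforms such a formula into a tripartite graph $H$ with $N = O(n+m) = O(n)$ vertices together with a target cover size $s$; this intermediate problem remains $2^{\Omega(N)}$-hard. The main ingredient is then the reduction from tripartite \textsc{Vertex Cover} to \textsc{Treedepth} that underlies this paper: from $H$ on $N$ vertices one constructs a graph $G$ with $|V(G)| = O(N)$ and an integer $k$ such that $H$ has a vertex cover of size $\leq s$ if and only if $\td(G) \leq k$. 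Composing the three reductions yields a graph on $n' = O(n)$ vertices whose treedepth decides satisfiability, so a hypothetical $O(2^{\varepsilon n'})$-time \textsc{Treedepth} algorithm, for $\varepsilon$ small enough relative to the overall blow-up constant, would contradict ETH.

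The main obstacle is engineering the vertex-cover-to-treedepth reduction so that the number of vertices grows only linearly, rather than by the additional logarithmic factor one naturally incurs (recall that a path on $\ell$ vertices already has treedepth $\Theta(\log \ell)$, so any path-based encoding would inflate the instance by a log factor and thus break the ETH conclusion). I would follow the blueprint of the recent \textsc{Treewidth} lower bound of~\cite{tw-hardness}: start with a rigid backbone with tightly controlled treedepth, attach constant-size selection gadgets encoding the vertices of $H$, and, for each edge of $H$, route the incidences through a small gadget whose elimination-forest height exceeds $k$ unless at least one endpoint of that edge has been selected into the cover. Careful accounting of the vertices contributed by gadgets against the height budget $k$ then yields both the equivalence and the $O(N)$ vertex bound.

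Once such a linear reduction is in place, \Cref{thm:td-eth-lower-bound} follows immediately: an $O(2^{\varepsilon n'})$-time algorithm for \textsc{Treedepth} would, after composing the reductions, solve sparse \textsc{3-SAT} in time $O(2^{C\varepsilon n})$ for some fixed constant $C$ depending only on the blow-ups, violating ETH for any $\varepsilon < \varepsilon_0 / C$. The same reduction, with a mildly sharpened gadget analysis giving a multiplicative gap between the two sides, will also yield \Cref{thm:td-no-ptas}, so the two results are obtained in a unified way.
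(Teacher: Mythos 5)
Your overall architecture is exactly the paper's: sparsify via Impagliazzo--Paturi--Zane to get bounded-occurrence \textsc{3-SAT}, reduce that with linear blow-up to \textsc{Vertex Cover} on a tripartite graph, and then reduce tripartite \textsc{Vertex Cover} to \textsc{Treedepth} with linear blow-up, so that an $O(2^{\varepsilon n})$ treedepth algorithm would refute ETH. You also correctly identify the crux, namely that the vertex-cover-to-treedepth step must avoid any superconstant blow-up. So far this matches the paper.

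The gap is that you never actually supply that crucial step; you assert that ``the reduction that underlies this paper'' exists and sketch a construction (rigid backbone, constant-size selection gadgets per vertex, a small gadget per edge whose height exceeds the budget unless an endpoint is selected) that is not the paper's construction and whose correctness is far from clear. Local per-edge gadgets are problematic for treedepth because the height of an elimination forest is a global quantity: you would need every uncovered edge to force a long root-to-leaf path \emph{on top of} everything else already charged to that path, and nothing in your sketch explains how the contributions of different gadgets are forced onto a common path rather than spread across incomparable branches. The paper's actual mechanism is different and much more rigid: given the tripartite graph with parts $A,B,C$ and a bound $\ell \ge \vc(G)$, attach one clique $K_X$ of size $\ell$ completely joined to each part $X$ (plus one pendant-type vertex $z_X$). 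Then \cref{obs:td-clique} forces each $K_X$ onto a single root-to-leaf path, \cref{obs:td-conn-root} applied to paths $\kappa_X v_X v_Y \kappa_Y$ forces, for every edge $v_Xv_Y$, one endpoint to be an ancestor of both $\kappa_X$ and $\kappa_Y$, and a lowest-common-ancestor argument among $\kappa_A,\kappa_B,\kappa_C$ shows that the ancestors of one fixed $\kappa_X$ contain an entire vertex cover, giving $\td(H) = \vc(G)+\ell+1$ exactly. Without this (or an equally global) argument, your reduction chain is incomplete: the first two steps are routine, but the theorem stands or falls with the third, and your proposal leaves it as an unproven sketch pointing in a direction that likely does not work as stated.
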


This also excludes $2^{o(k)}n^{O(1)}$-time parameterized exact algorithms for the problem under ETH.

In fact, we obtain that even approximating treedepth to a small constant factor requires almost exponential time.

\begin{theorem}
  \label{thm:td-eth-approx-lower-bound}
  Assuming ETH, there exist absolute constants $\delta, \varepsilon, c > 0$ such that $(1 + \delta)$-approximating the treedepth of an~$n$-vertex graph cannot be done in time $O(2^{ \varepsilon n / \log^c n})$.
\end{theorem}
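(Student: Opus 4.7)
The plan is to replay the reduction underlying \cref{thm:td-no-ptas,thm:td-eth-lower-bound}, but to start from a near-linear size PCP rather than from plain \textsc{3-SAT}. Concretely, I would invoke Dinur's quasi-linear PCP theorem, which maps an $n$-variable \textsc{3-SAT} instance $\varphi$ to a \textsc{gap-3-SAT} instance $\psi$ of size $N_0 = n \cdot (\log n)^{O(1)}$ such that it is \textsf{NP}-hard to distinguish the case that $\varphi$ is satisfiable from the case that no assignment satisfies more than a $(1-\epsilon_0)$-fraction of the clauses of $\psi$, for some absolute constant $\epsilon_0 > 0$. Under ETH, distinguishing these two cases requires time $2^{\Omega(n)} = 2^{\Omega(N_0 / \log^{c_0} N_0)}$ for some constant $c_0 > 0$.

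Next, I would feed $\psi$ (or, equivalently, the associated gap-\textsc{Vertex Cover} instance on a tripartite graph) into the same linear reduction to \textsc{Treedepth} used for the previous two theorems. Because the reduction is gap-creating by the analysis yielding \cref{thm:td-no-ptas}, it maps satisfiable inputs to graphs $G$ with $\td(G) \le k$ and $(1-\epsilon_0)$-far-from-satisfiable inputs to graphs with $\td(G) \ge (1+\delta)k$, while the vertex count remains $N = O(N_0)$. Composing with a hypothetical $(1+\delta)$-approximation algorithm running in time $O(2^{\varepsilon N / \log^c N})$ would decide the \textsc{gap-3-SAT} promise in time $2^{O(\varepsilon n)}$, contradicting ETH for sufficiently small $\varepsilon$ and any $c \ge c_0$.

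The main obstacle is verifying that the reduction's lower-bound argument is \emph{robust}, in the sense that having only a $(1-\epsilon_0)$-fraction of satisfied clauses already forces $\td(G) \ge (1+\delta)k$, rather than merely full unsatisfiability doing so. Such robustness is typically automatic when the reduction is direct, i.e.\ each clause is reflected by a local gadget whose contribution to the treedepth lower bound is additive in the number of defects; but it needs to be checked against the exact gadgets used for \cref{thm:td-no-ptas}. If the reduction turns out to fall short in this respect, one can instead plug in a gap-amplified version of tripartite \textsc{Vertex Cover}---which is, after all, where the constant gap for \cref{thm:td-no-ptas} originates---and proceed identically.
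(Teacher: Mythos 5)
Your overall route is the same as the paper's: compose a quasi-linear-size PCP (gap-\textsc{3-SAT}) with the linear treedepth reduction of \cref{cor:formula-to-td}, and observe that the resulting constant multiplicative gap in treedepth, together with the $O(n\,\mathrm{polylog}\,n)$ blow-up, rules out $(1+\delta)$-approximation in time $2^{\varepsilon n/\log^c n}$ under ETH. The ``robustness'' you flag as the main obstacle is in fact immediate: \cref{cor:formula-to-td} gives an \emph{exact} formula $\td(H(\varphi,p)) = 2(2^k-1)m - m' + 2^k p n + 1$ in which the maximum number $m'$ of satisfiable clauses enters additively, and since bounded occurrence forces $m = \Theta(n)$, a constant-fraction gap in $m'$ is a constant fraction of $|V(H)|$, hence a constant multiplicative gap in treedepth. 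So that part of your plan goes through exactly as in \cref{thm:td-no-ptas}.

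The genuine gap is at the very first step. A quasi-linear PCP (Dinur's or otherwise) produces an output whose size is quasi-linear in the \emph{size of the input instance}, not in its number of variables. An $n$-variable \textsc{3-SAT} instance may have $\Theta(n^3)$ clauses, so without preprocessing you only get $N_0 = n^3\,\mathrm{polylog}(n)$, and the ETH lower bound $2^{\Omega(n)}$ becomes $2^{\Omega(N_0^{1/3})}$ --- far too weak for the theorem. This is precisely why the paper first applies the Sparsification Lemma (\cref{lem:sparsification}): it replaces $\varphi$ by at most $2^{\varepsilon' n'}$ bounded-occurrence (hence linear-size) formulas on the same variables, runs the PCP and treedepth reduction on each, and then charges the $2^{\varepsilon' n'}$ overhead against the ETH constant $\lambda$. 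Your proposal omits this step entirely, so the claimed identity $2^{\Omega(n)} = 2^{\Omega(N_0/\log^{c_0} N_0)}$ is unjustified as written. (A second, smaller omission: \cref{cor:formula-to-td} requires the gap formula to have bounded variable occurrence, which you would also need to extract from the PCP construction, as the paper does via H{\aa}stad's framework.) Both issues are repairable by standard means, and the repair is exactly the structure of the paper's proof.
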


\section{Preliminaries}

In this work, we consider only simple undirected graphs with no self-loops.
For a~graph~$G$, we define $V(G)$ to be the set of vertices, $E(G)$ to be the set of edges, and $\conncomp(G)$ to be the set of connected components of $G$.
Also, we write $N_G(v)$ for the set of neighbors of a vertex $v \in V(G)$ in the graph $G$.
A~forest $F$ is a~graph without cycles; it is rooted if each connected component of $F$ (a~\emph{tree}) has a~\emph{root}.
The depth of a~forest is then the number of vertices on the longest root-to-leaf path in $F$.
A~vertex $u$ is an~\emph{ancestor} of $v$ in $F$ if $u$ lies on the path between $v$ and the root of the tree of $F$ containing $v$; we equivalently then say that $v$ is a~\emph{descendant} of $u$.
In particular, every vertex is its own ancestor and descendant.

A~set $S \subseteq V(G)$ is a~\emph{vertex cover} if each edge of $G$ is incident to at least one vertex of $S$.
The \emph{vertex cover number} of $G$, denoted by $\vc(G)$, is the minimum cardinality of a~vertex cover of $G$.

The treedepth of $G$, denoted $\td(G)$, is defined recursively as follows:
\[
  \td(G) = \begin{cases}
    0 & \text{if $G$ has no vertices,} \\
    \max_{H \in \conncomp(G)} \td(H) & \text{if $G$ is disconnected,} \\
    \min_{v \in V(G)} 1 + \td(G - v) & \text{if $G$ is connected.}
  \end{cases}
\]
We also use an~equivalent definition of treedepth involving \emph{elimination forests}.
Here, we say that a~rooted forest $F$ is an~elimination forest of $G$ if $V(F) = V(G)$ and for every edge $uv$ of $G$, vertices $u$ and $v$ are in the ancestor--descendant relationship in $F$.
Then, the treedepth of $G$ is the minimum possible depth of an~elimination forest of $G$.

We use the following straightforward facts in our arguments:
\begin{observation}
  \label{obs:td-clique}
  If $K$ is a~clique in $G$, then in every elimination forest of $G$, all vertices of $K$ are contained in a~single root-to-leaf path.
\end{observation}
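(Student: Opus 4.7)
The plan is to argue that the ancestor--descendant relation of $F$, when restricted to $K$, is a \emph{total} order, and then read off a single root-to-leaf path that contains every vertex of $K$.

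First I would pick any two vertices $u, v \in K$. Since $K$ is a clique, $uv$ is an edge of $G$, so by definition of an elimination forest either $u$ is an ancestor of $v$ in $F$ or $v$ is an ancestor of $u$. In particular, $u$ and $v$ lie in the same tree of $F$. Combined with transitivity of the ancestor relation and the fact that it is antisymmetric on distinct vertices, this shows that ancestor--descendant is a total order $\preceq$ on $K$.

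Next I would let $v^\star$ be the maximum of $K$ under $\preceq$, i.e., the unique vertex of $K$ that is a descendant of every other vertex of $K$. Extend the ancestor chain of $v^\star$ down to some leaf $\ell$ that is a descendant of $v^\star$ in $F$ (possibly $\ell = v^\star$), and let $P$ be the root-to-leaf path ending at $\ell$. Every vertex of $K$ is an ancestor of $v^\star$, hence an ancestor of $\ell$, hence lies on $P$. This is exactly the claimed single root-to-leaf path.

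I do not anticipate any real obstacle: the whole argument is immediate from the edge-preservation property of elimination forests and the fact that a totally ordered set embeds into a chain. The only mild care to take is to phrase the conclusion in terms of a root-to-leaf path rather than just a chain: one must explicitly extend past $v^\star$ down to a leaf, which is why I pick $\ell$ to be a leaf below $v^\star$.
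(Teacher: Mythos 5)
Your proof is correct and is exactly the standard argument the paper has in mind when it labels this a "straightforward fact" and omits a proof: every pair of clique vertices is joined by an edge, hence comparable in the ancestor--descendant order, so $K$ forms a chain whose deepest element's root-to-leaf path contains all of $K$. No issues.
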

\begin{observation}
  \label{obs:td-conn-root}
  If for some nonempty $S \subseteq V(G)$, the induced subgraph $G[S]$ is connected, then in every elimination forest of $G$ some vertex of $S$ is an~ancestor of all elements of $S$.
\end{observation}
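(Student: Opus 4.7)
The plan is to assign to each $w \in S$ the shallowest ancestor of $w$ (in the elimination forest) that still belongs to $S$, and to argue that this assignment is constant on $S$; the common value will then be a vertex of $S$ that is an ancestor of every element of $S$. Concretely, I would fix an arbitrary elimination forest $F$ of $G$, and for each $w \in S$ define $\pi(w)$ as the shallowest vertex of $S$ that is an ancestor of $w$ in $F$. Since the ancestors of $w$ in $F$ form a chain under the ancestor relation, and $w$ itself is an ancestor of $w$ lying in $S$, the value $\pi(w) \in S$ is well-defined.

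The heart of the argument is to show that $\pi(u) = \pi(v)$ whenever $uv$ is an edge of $G[S]$. Because $F$ is an elimination forest and $uv \in E(G)$, the vertices $u$ and $v$ are in ancestor--descendant relation; assume without loss of generality that $u$ is an ancestor of $v$. Then $\pi(u) \in S$ is an ancestor of $u$ and hence of $v$, which forces $\pi(v)$ to be no deeper than $\pi(u)$. For the reverse, $\pi(v)$ and $u$ are both ancestors of $v$ and hence comparable in $F$; $\pi(v)$ cannot be a strict descendant of $u$, since otherwise $u \in S$ would be a strictly shallower ancestor of $v$ in $S$, contradicting the choice of $\pi(v)$. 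Thus $\pi(v)$ is an ancestor of $u$, so $\pi(u)$ is no deeper than $\pi(v)$. Combining the two inequalities yields $\pi(u) = \pi(v)$.

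Finally, using that $G[S]$ is connected, I would propagate this edge-wise equality along a spanning tree of $G[S]$ to conclude that $\pi$ is constant on $S$; its unique value $r \in S$ is an ancestor of every vertex of $S$, as required. The only mildly delicate point in the whole argument is excluding the case in which $\pi(v)$ is a proper descendant of $u$, but this is immediate from the minimality in the definition of $\pi(v)$; all other steps are routine bookkeeping about chains of ancestors in $F$.
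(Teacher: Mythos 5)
Your proof is correct. The paper states this observation without proof (it is listed among the ``straightforward facts''), so there is no argument to compare against; your write-up is a valid and complete verification. The map $\pi$ sending each $w \in S$ to its shallowest ancestor lying in $S$ is well defined, your case analysis showing $\pi(u) = \pi(v)$ across an edge $uv$ of $G[S]$ is sound (the key exclusion of $\pi(v)$ being a strict descendant of $u$ is handled correctly via minimality), and connectivity then forces $\pi$ to be constant. For what it is worth, a slightly shorter route is to take $r \in S$ of minimum depth in $F$ and suppose some $w \in S$ is not a descendant of $r$: a path in $G[S]$ from $r$ to $w$ must contain an edge $xy$ with $x$ a descendant of $r$ and $y$ not, and since $x,y$ are comparable in $F$ and $\mathrm{depth}(y) \geq \mathrm{depth}(r)$, one gets a contradiction. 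Both arguments use the same two ingredients (edges force comparability; ancestors of a fixed vertex form a chain), so the difference is cosmetic.
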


We say that a~set $X \subseteq V(G)$ of vertices of $G$ is a~\emph{set of (false) twins} if $N_G(v) = N_G(w)$ for all $v,w \in X$.
Observe that $X$ is an independent set in this case.
Every inclusion-wise minimal vertex cover of $G$ either contains $X$ in its entirety or is disjoint from $X$.

In this paper, we work with formulas with Boolean variables, say $x_1, \ldots, x_n$.
A~\emph{literal} is a~formula of the form $x_i$ or $\neg x_i$.
For an~integer $k$, a~Boolean formula is said to be in \emph{$k$-CNF form} (or: is a~\emph{$k$-CNF formula}) if it is a~conjunction of \emph{clauses}: subformulas of the form $\ell_1 \vee \ldots \vee \ell_k$ for literals $\ell_1, \ldots, \ell_k$, each containing a~different variable of the formula.

\section{Treedepth and Vertex Cover of Tripartite Graphs}

We say that a~graph $G = (V, E)$ is tripartite if there exists a~tripartition $V = A \cup B \cup C$ such that the subgraphs of $G$ induced by $A$, $B$, and $C$, respectively, are edgeless.
We argue that it is possible to extend each tripartite graph $G$ to a~supergraph $H$ by adding suitable clique gadgets so that the treedepth of $H$ is tightly controlled by the vertex cover number of~$G$.

\begin{lemma}
  \label{lem:vc-to-td}
  Let $G = (V, E)$ be a~tripartite graph with tripartition $V = A \cup B \cup C$ such that $A, B, C$ are nonempty and let $\ell$ be a~positive integer such that $\ell \geq \vc(G)$.
  Consider a~supergraph $H = (V', E')$ of $G$ created by adding to $G$ three $\ell$-vertex cliques $K_A$, $K_B$, $K_C$ and three additional vertices $z_A$, $z_B$, $z_C$, and adding for each $X \in \{A, B, C\}$ all edges between $K_X$ and $X \cup \{z_X\}$.
  That is,
  \[ \begin{split}
    V' & = V \cup K_A \cup K_B \cup K_C \cup \{z_A, z_B, z_C\}, \\
    E' & = E\ \cup\ (A \cup \{z_A\}) \times K_A\ \cup\ (B \cup \{z_B\}) \times K_B\ \cup\ (C \cup \{z_C\}) \times K_C.
  \end{split} \]
  Then $\td(H) = \vc(G) + \ell + 1$.
\end{lemma}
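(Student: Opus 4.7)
I will exhibit an elimination forest of depth $\vc(G)+\ell+1$ directly. Let $S$ be a minimum vertex cover of $G$, and list its vertices as a descending path of height $\vc(G)$ at the top of the forest. Since $S$ is a vertex cover, $V\setminus S$ is independent in $G$, so after removing $S$ the graph $H-S$ splits into three pairwise vertex-disjoint ``join'' gadgets, one per part $X \in \{A,B,C\}$: the $\ell$-clique $K_X$ joined completely to the independent set $(X\setminus S)\cup\{z_X\}$. Each such complete join has treedepth $\ell+1$ (arrange $K_X$ as a descending path of height $\ell$, and make every vertex of $(X\setminus S)\cup\{z_X\}$ a leaf below the deepest $K_X$-vertex), so hanging the three parallel subtrees under $S$ yields the required depth.

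\textbf{Lower bound.} Fix an arbitrary elimination forest $F$ of $H$. For each $X \in \{A,B,C\}$, \cref{obs:td-clique} places the $\ell$-clique $K_X$ on a single root-to-leaf path in $F$; let $t_X$ be the topmost vertex of $K_X$ on that path. The plan is to consider the lowest common ancestor $q^*$ of $\{t_A,t_B,t_C\}$ in $F$, let $P$ be the path in $F$ from the root of $q^*$'s tree down to $q^*$, and prove (i)~$V \cap P$ is a vertex cover of $G$; and (ii)~$\td(F) \ge |P|+\ell+1$. Combined, these give $\vc(G) \le |V\cap P| \le |P| \le \td(F)-\ell-1$. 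For~(i), take any $G$-edge $uv$ with, say, $u \in A$ and $v \in B$: the adjacency of $u$ to all of $K_A$ puts $u$ in ancestor--descendant relation with every vertex of $K_A$, so $u$ either lies on $P$ or in the subtree of $q^*$ containing $K_A$; symmetrically for $v$ and $K_B$. Were both $u$ and $v$ absent from $P$, they would land in distinct subtrees of $q^*$ and hence could not be adjacent in $F$. For~(ii), \cref{obs:td-clique} applied to the $(\ell+1)$-clique $K_X \cup \{z_X\}$ places all $\ell+1$ of its vertices on one root-to-leaf path through $q^*$, giving $\ell+1$ further vertices strictly below $q^*$ in the generic case.

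\textbf{Main obstacle.} The technical heart is the case analysis when $q^* = t_X$ for some $X$ (so one clique sits nested under another in $F$), when two cliques share a subtree below $q^*$, or when some $z_Y$ is placed above $q^*$ on $P$. In each such degenerate situation, one of~(i) or~(ii) loses a unit, but the other side compensates: a shared subtree stacks two $\ell$-cliques along a single root-to-leaf path, boosting the depth-below-$q^*$ bound by an extra $\ell$; a $z_Y$ placed above $q^*$ occupies a non-$V$ slot on $P$ and thereby tightens~(i) to $\vc(G) \le |P|-1$; and $q^* = t_X \in K_X$ has the same effect of contributing a non-$V$ vertex to $P$. I expect the cleanest closure will be a short case split on what kind of vertex $q^*$ is---namely $V$-vertex, $K_X$-vertex, or $z_X$-vertex---and a verification that the two inequalities close each time; disconnected $H$ is handled by applying the same argument inside the connected component of $F$ hosting $q^*$.
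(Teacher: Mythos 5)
Your upper bound is correct and is the same construction as the paper's. The lower bound, however, has a genuine gap, and it sits exactly where you flagged the ``main obstacle'': the framework ``(i) $V\cap P$ is a vertex cover and (ii) depth $\ge |P|+\ell+1$'' breaks when the paths carrying two of the cliques, say $K_A$ and $K_B$, descend into the \emph{same} child subtree of $q^*$. This can happen: $q^*$ is only the lowest common ancestor of all three $t_X$, so two of them may branch apart strictly below it while the third splits off at $q^*$. In that situation an edge $uv$ with $u\in A$, $v\in B$ can have both endpoints inside the shared subtree, so it is covered only strictly below $q^*$ and (i) fails. Your proposed compensation --- that a shared subtree ``stacks two $\ell$-cliques along a single root-to-leaf path, boosting the depth bound by an extra $\ell$'' --- is false: within the shared child subtree the two clique paths may diverge at a deeper branching vertex, so $K_A$ and $K_B$ need not be comparable at all. (Nor could they be forced to stack: that would give depth $\ge 2\ell+1$, which an optimal forest avoids since $\ell\ge\vc(G)$.) So the case split you defer does not close with the tools you set up; the vertices covering the $A$--$B$ edges live below $q^*$ and your accounting has no slot for them.

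The repair is not a local case analysis on the type of $q^*$ but a different choice of reference point, which is what the paper does. Work with the \emph{deepest} clique vertices $\kappa_X$ (first checking they are pairwise incomparable, else two cliques stack and the depth is already $\ge 2\ell+1$), form the three pairwise lowest common ancestors $u_{AB}$, $u_{BC}$, $u_{CA}$ --- these are pairwise comparable since each pair shares a $\kappa$ --- and let the deepest of them be, say, $u_{AB}$. Then let $Q$ be the set of $G$-vertices that are ancestors of $\kappa_A$. Every edge meeting $A$ is covered by an ancestor of $\kappa_A$ directly (via \cref{obs:td-conn-root} applied to the path $\kappa_X v_X v_Y \kappa_Y$), and every $B$--$C$ edge is covered by an ancestor of $u_{BC}$, hence of $u_{AB}$, hence of $\kappa_A$. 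So $Q$ is a full vertex cover lying, together with $K_A\cup\{z_A\}$, on a single root-to-leaf path, which gives depth $\ge \vc(G)+\ell+1$ in one stroke and absorbs all of your degenerate cases (the only separate case needed is when one side of $G$ has no incident edges). I recommend restructuring your lower bound around this choice rather than around the global LCA $q^*$.
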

\begin{proof}
  Let $S$ be the minimum vertex cover of $G$.
  Then $H - S$ has three connected components, namely, for each $X \in \{A, B, C\}$, the subgraph of $H$ induced by $K_X \cup \{z_X\} \cup (X \setminus S)$.
  Noting that $|K_X| = \ell$ and $\{z_X\} \cup (X \setminus S)$ is an~independent set in $H$, we have $\td(H[K_X \cup \{z_X\} \cup (X \setminus S)]) \leq |K_X| + 1 = \ell + 1$, and therefore
  \[
    \td(H) \leq |S| + \max_{X \in \{A, B, C\}} \td(H[K_X \cup \{z_X\} \cup (X \setminus S)]) \leq \vc(G) + \ell + 1.
  \]

  We now move to the lower bound on $\td(H)$.
  Aiming for contradiction, suppose that $\td(H) \leq \vc(G) + \ell$; then there exists an~elimination forest $F$ of $H$ of depth at most $\vc(G) + \ell$.

  By \cref{obs:td-clique}, for each $X \in \{A, B, C\}$, all vertices of $K_X$ belong to a~single root-to-leaf path (possibly different for different choices of $X$).
  We can thus choose three vertices $\kappa_A \in K_A$, $\kappa_B \in K_B$, $\kappa_C \in K_C$ as the deepest vertices of the respective cliques in $F$.

  \begin{claim}
    \label{cl:depth-lower-bound-gadget}
    Let $X \in \{A, B, C\}$.
    Suppose that $S \subseteq V' \setminus (K_X \cup \{z_X\})$ is so that all vertices of $S$ are ancestors of $\kappa_X$ in $F$.
    Then the depth of $F$ is at least $|S| + \ell + 1$.
  \end{claim}
  \begin{proofofclaim}
    All vertices of $K_X$ are ancestors of $\kappa_X$, and $z_X$ is either an~ancestor or a~descendant of $\kappa_X$.
    In either case, vertices of $S \cup K_X \cup \{z_X\}$ lie on a~single root-to-leaf path in $F$, implying that the depth of $F$ is at least $|S| + \ell + 1$.
  \end{proofofclaim}

  \begin{claim}
    \label{cl:no-deepest-vertices}
    No pair of vertices in $\{\kappa_A, \kappa_B, \kappa_C\}$ is in ancestor--descendant relationship in $F$.
  \end{claim}
  \begin{proofofclaim}
    Suppose without loss of generality that $\kappa_A$ is an~ancestor of $\kappa_B$.
    Then all vertices of $K_A$ are ancestors of $\kappa_B$, so from \cref{cl:depth-lower-bound-gadget} we infer that the depth of $F$ is at least $|K_A| + \ell + 1 = 2\ell + 1 > \vc(G) + \ell$; a~contradiction.
  \end{proofofclaim}

  \begin{claim}
    \label{cl:edge-ancestor}
    Let $X, Y \in \{A, B, C\}$ with $X \neq Y$ and pick $v_X \in X$, $v_Y \in Y$ connected by an~edge.
    Then one of the vertices $v_X, v_Y$ is an~ancestor of both $\kappa_X$ and $\kappa_Y$.
  \end{claim}
  \begin{proofofclaim}
    Let $S = \{v_X, v_Y, \kappa_X, \kappa_Y\}$.
    Noting that $\kappa_X v_X v_Y \kappa_Y$ is a~path in $H$, we have by \cref{obs:td-conn-root} that some vertex of $S$ is an~ancestor of all vertices in $S$.
    Since $\kappa_X$ and $\kappa_Y$ are not in the ancestor--descendant relationship (\cref{cl:no-deepest-vertices}), the claim follows.
  \end{proofofclaim}

  Let us now resolve a~degenerate case where at least one of the sides of $G$ (say, $C$) is not incident to any edge in $G$.
  Define $Q$ to be the set of vertices of $A \cup B$ that are ancestors of both $\kappa_A$ and $\kappa_B$; by \cref{cl:edge-ancestor}, $Q$ is a~vertex cover of $G$.
  Hence by \Cref{cl:depth-lower-bound-gadget} for $X = A$, the depth of $F$ is at least $|Q| + \ell + 1 > \vc(G) + \ell$; a~contradiction.

  Thus each side of $G$ is incident to an~edge of $G$.
  Therefore, we can easily verify that $H$ is connected and so $F$ is a~single rooted tree.
  Let then $u_{AB}$ be the lowest common ancestor of $\kappa_A$ and $\kappa_B$ in $F$; analogously define $u_{BC}$ and $u_{CA}$.
  The three newly defined vertices pairwise remain in the ancestor--descendant relationship: for instance, $u_{AB}$ and $u_{BC}$ are both ancestors of $\kappa_B$, so one of them is an~ancestor of the other.
  In particular, $u_{AB}$, $u_{BC}$ and $u_{CA}$ lie on a~single root-to-leaf path, and (at least) one of these vertices---say, $u_{AB}$---is a~descendant of all three.
  Let $Q \subseteq V$ be the set of vertices of $G$ that are ancestors of $\kappa_A$.
  Applying \cref{cl:edge-ancestor} multiple times, we observe that:
  \begin{itemize}
    \item for each $v_Av_B \in E(G)$ with $v_A \in A$, $v_B \in B$, either $v_A \in Q$ or $v_B \in Q$;
    \item for each $v_Av_C \in E(G)$ with $v_A \in A$, $v_C \in C$, either $v_A \in Q$ or $v_C \in Q$;
    \item for each $v_Bv_C \in E(G)$ with $v_B \in B$, $v_C \in C$, either $v_B$ or $v_C$ is an~ancestor of both $\kappa_B$ and $\kappa_C$ in $F$; hence it is also an~ancestor of $u_{BC}$, so also an~ancestor of $u_{AB}$ and thus also $\kappa_A$. Consequently either $v_B \in Q$ or $v_C \in Q$.
  \end{itemize}

  We conclude that $Q$ is a~vertex cover of $G$; and so by \Cref{cl:depth-lower-bound-gadget}, the depth of $F$ is at least $|Q| + \ell + 1 > \vc(G) + \ell$; a~contradiction.
\end{proof}

\section{Hardness Proofs}

In this section, we will prove the announced hardness results (\cref{thm:td-no-ptas,thm:td-eth-lower-bound,thm:td-eth-approx-lower-bound}).
All reductions will start from an~instance $\varphi$ of {\sc Satisfiability} in $k$-CNF form, in which every variable occurs a~bounded number of times, and produce a~graph (or a~family of graphs) whose treedepth tightly depends on the maximum number of clauses that can be satisfied in $\varphi$.
At the heart of our framework lies a~construction of a~tripartite graph adapted from the work of Bonnet~\cite{tw-hardness}, which we formally describe below.

Let $\varphi$ be a~$k$-CNF formula and $p$ be an~integer.
We then define a~tripartite graph $G(\varphi, p)$ as follows.
Let $\gamma = 2^{k-1} p$.
Define the vertex set of $G(\varphi, p)$ as $A \cup B_+ \cup B_-$, where:
\begin{itemize}
  \item For every clause $C_i = \ell_1 \vee \ldots \vee \ell_k$ of $\varphi$ and every possible choice $s_1 \in \{\ell_1, \neg \ell_1\}, \ldots, s_k \in \{\ell_k, \neg \ell_k\}$ such that $(s_1, \ldots, s_k) \neq (\neg \ell_1, \ldots, \neg \ell_k)$, we add to $A$ a~vertex $a_i(s_1, \ldots, s_k)$. Let $A(C_i)$ be the set of all vertices added to $A$ for clause $C_i$. Intuitively, $A(C_i)$ contains all valuations of variables represented by the literals $\ell_1, \ldots, \ell_k$ that satisfy the clause $C_i$.
  \item For every variable $x_j$ of $\varphi$ and every $t \in [\gamma]$, we add to $B_+$ a~vertex $b_{j,t}$ and to $B_-$ a~vertex $c_{j,t}$.
    Let then $B_+(x_j) = \{b_{j,1}, \ldots, b_{j,\gamma}\}$ and $B_-(x_j) = \{c_{j,1}, \ldots, c_{j,\gamma}\}$.
    Also, for convenience, let $B(x_j) = B_+(x_j)$ and $B(\neg x_j) = B_-(x_j)$.
\end{itemize}

We will now construct the set of edges of $G(\varphi, p)$ to ensure that every valuation of variables of $\varphi$ corresponds to an~inclusion-wise minimal vertex cover $S$ that includes: ($i$) all vertices of $A$, except one vertex of $A(C_i)$ for each satisfied clause $C_i$, corresponding to the valuation of the variables represented by the literals of $C_i$, and ($ii$) all vertices of $B_+(x_j)$ if $x_j$ is evaluated positively, or $B_-(x_j)$ if $x_j$ is evaluated negatively.
To this end, we construct the set $E$ of edges of $G(\varphi, p)$ via the following process:
\begin{itemize}
  \item Add every edge between $B(x_j)$ and $B(\neg x_j)$ for every variable $x_j$;
  \item Connect each vertex $a_i(s_1, \ldots, s_k) \in A$ with every vertex of $B(s_1) \cup \ldots \cup B(s_k)$.
\end{itemize}

This concludes the construction.
We now show that the vertex cover number of $G(\varphi, p)$ is controlled by the maximum number of clauses satisfiable by~$\varphi$.

\begin{lemma}
  \label{lem:formula-to-vc}
  Let $\varphi$ be a~$k$-CNF formula with $n$ variables and $m$ clauses where every variable appears at most $2p + 1$ times.
  Suppose also that $m'$ is the maximum number of clauses that can be satisfied in $\varphi$.
  Then \[\vc(G(\varphi, p)) = (2^k - 1)m - m' + 2^{k-1} pn.\]
\end{lemma}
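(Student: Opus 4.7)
The plan is to establish the equality by proving both inequalities separately: the upper bound by exhibiting an explicit vertex cover built from a maximum assignment, and the lower bound by structurally analyzing a minimum vertex cover and extracting an assignment of $\varphi$ from it.

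For the upper bound $\vc(G(\varphi,p)) \leq (2^k-1)m - m' + 2^{k-1}pn$, fix an assignment $\alpha^*$ of $\varphi$ that satisfies $m'$ clauses. Construct a vertex cover $S_{\alpha^*}$ as follows. First, for each variable $x_j$, put the whole set $B(\ell_j^*) \subseteq B_+ \cup B_-$ into $S_{\alpha^*}$, where $\ell_j^*$ is the literal of $x_j$ set to true by $\alpha^*$; this contributes exactly $2^{k-1}pn$ vertices. Second, for each clause $C_i$ satisfied by $\alpha^*$, include in $S_{\alpha^*}$ every vertex of $A(C_i)$ except the single $a_i(s_1^*, \ldots, s_k^*)$, whose tuple records the literals of $C_i$'s variables made true by $\alpha^*$ (the tuple lies in $A(C_i)$ precisely because $C_i$ is satisfied). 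For each unsatisfied clause, include all of $A(C_i)$. A case analysis over the two edge types confirms that $S_{\alpha^*}$ is a vertex cover: the $B_+$--$B_-$ edges are hit by the included side, and every excluded $A$-vertex has all of its $B$-neighbors already inside $S_{\alpha^*}$ by construction. Counting gives $|S_{\alpha^*}| = (2^k-1)m - m' + 2^{k-1}pn$.

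For the reverse inequality, take an inclusion-minimal minimum vertex cover $S$ of $G(\varphi,p)$. Two structural facts drive the argument: each $B_+(x_j)$ (and likewise each $B_-(x_j)$) is a set of false twins, so the minimality of $S$ forces it to contain each such set either fully or not at all; and the complete bipartite subgraph between $B_+(x_j)$ and $B_-(x_j)$ forces $S$ to contain at least one of the two sides in full. Let $V_+ = \{j : B_+(x_j) \subseteq S\}$, $V_- = \{j : B_-(x_j) \subseteq S\}$, and $V_\pm = V_+ \cap V_-$, so $V_+ \cup V_- = [n]$. Under the assumption that we may choose the minimum vertex cover with $V_\pm = \emptyset$, an assignment $\alpha_S$ of $\varphi$ is canonically defined by $\alpha_S(x_j) = $~true iff $j \in V_+$. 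Any $a_i(s_1, \ldots, s_k)$ excluded from $S$ must have $B(s_r) \subseteq S$ for every $r$, forcing the tuple $(s_1, \ldots, s_k)$ to match $\alpha_S$ on the variables of $C_i$; and because this tuple differs from $(\neg \ell_1, \ldots, \neg \ell_k)$, it certifies that $\alpha_S$ satisfies $C_i$. Hence each clause contributes at most one excluded $A$-vertex, and only when satisfied by $\alpha_S$; summing yields $|A \setminus S| \leq m'$, and combining with $|B \cap S| = 2^{k-1}pn$ gives the desired lower bound.

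The main obstacle is the claim that a minimum vertex cover can be chosen with $V_\pm = \emptyset$. The plan is a local exchange argument: if $j \in V_\pm$ for a minimum $S$, replace $B_-(x_j)$ in $S$ by the set $Y_-$ of $A$-vertices outside $S$ that are adjacent to $B_-(x_j)$, producing a new vertex cover of size $|S| - \gamma + |Y_-|$ with strictly smaller $|V_\pm|$; symmetrically for $B_+(x_j)$. Using the bound $2p+1$ on occurrences of $x_j$ together with the structure that at most $2^{k-1}$ tuples of $A(C_i)$ select $\neg x_j$ at the position of $x_j$, one controls $|Y_-|$ (or $|Y_+|$) and argues that at least one of the two swaps is non-increasing, so iterating drives $V_\pm$ to $\emptyset$ without leaving the set of minimum vertex covers. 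Making this accounting precise --- in particular, exploiting that exactly the right combinatorial budget is built into the choice $\gamma = 2^{k-1}p$ relative to the per-variable occurrence bound $2p+1$ --- is the crucial technical step of the lemma.
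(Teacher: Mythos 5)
Your upper bound and the skeleton of your lower bound match the paper's proof: build the cover from an optimal assignment for one direction, and for the other direction use the false-twins/complete-bipartite structure to read an assignment off a minimum cover once no variable has both $B_+(x_j)$ and $B_-(x_j)$ fully inside $S$. The problem is that you explicitly do not prove the one step on which everything hinges. You write that "making this accounting precise \dots is the crucial technical step of the lemma" and leave the exchange argument as a plan. That is exactly the step where the paper's proof does its work (it picks the literal $\ell_j \in \{x_j,\neg x_j\}$ occurring at most $p$ times, swaps $B(\ell_j)$ out for the $A$-neighbours of $B(\ell_j)$, and bounds the cost of the swap against $\gamma = 2^{k-1}p$). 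A proposal that names the crucial step but defers its verification is not a proof.

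Moreover, the accounting you gesture at does not close as stated. The $A$-neighbourhood of $B_-(x_j)$ consists of all vertices $a_i(s_1,\dots,s_k)$ with $\neg x_j$ among the $s_r$, and such vertices arise from \emph{every} clause containing the variable $x_j$ in either polarity (the tuple coordinate at the position of $x_j$ ranges over both $x_j$ and $\neg x_j$ regardless of the clause's polarity). So the naive budget for either swap is about $2^{k-1}\cdot(\text{number of clauses containing }x_j) \le 2^{k-1}(2p+1)$, which exceeds $\gamma = 2^{k-1}p$; the claim that "at least one of the two swaps is non-increasing" therefore does not follow from the facts you cite, and restricting to neighbours outside $S$ requires already knowing that few $A$-vertices are excluded, which is what one is trying to prove. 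The balance here is genuinely tight: keeping both sides of a variable costs an extra $\gamma = 2^{k-1}p$ but can let the cover drop up to one additional vertex of $A(C_i)$ for each of the up to $2p+1$ clauses containing that variable, and for $k=2$ this trade can actually be profitable (e.g.\ $p=1$, a variable occurring in three clauses), so the normalization $V_\pm=\emptyset$ cannot be waved through. You need to carry out this calculation in full --- and confront the fact that it is sensitive to the exact choice of $\gamma$ relative to $k$ and $p$ --- before the lower bound is established.
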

\begin{proof}
  First, suppose that $F$ is a~valuation of variables of $\varphi$ that satisfies $m'$ clauses; we think of $F$ as a~set of literals that includes exactly one literal from $\{x_j, \neg x_j\}$ for each variable~$x_j$.
  We define a~set $S$ of vertices of $G(\varphi, p)$ by including:
  \begin{itemize}
    \item every vertex of the form $a_i(s_1, \ldots, s_k) \in A$ such that $\{s_1, \ldots, s_k\} \not\subseteq F$; and
    \item all vertices of $B(\ell_j)$ for every literal $\ell_j \in F$.
  \end{itemize}

  Observe that $S$ is a~vertex cover of $G(\varphi, p)$.
  Indeed, every edge between $B_+$ and $B_-$ is covered by $S$.
  Next, consider the edges between $A$ and $B:= B_+ \cup B_-$.
  Pick a~vertex $a_i(s_1, \ldots, s_k) \in A \setminus S$.
  Every neighbor of this vertex belongs to $B(s_1) \cup \ldots \cup B(s_k)$.
  By~construction of $S \cap A$, we have $\{s_1, \ldots, s_k\} \subseteq F$, and so $B(s_1) \cup \ldots \cup B(s_k) \subseteq S$.

  Now let us determine the size of $S$.
  Since $F$ satisfies $m'$ clauses of $\varphi$, there exist exactly $m'$ vertices of the form $a_i(s_1, \ldots, s_k) \in A$ such that $s_1, \ldots, s_k \in F$.
  These are precisely the vertices of $A$ that do not belong to $S$.
  Therefore $|A \setminus S| = m'$ and so $|S \cap A| = (2^k - 1)m - m'$.
  Also, $|S \cap B| = \gamma n = 2^{k-1} p n$.
  Therefore, $|S| = (2^k - 1)m - m' + 2^{k-1}pn$.

  \smallskip

  On the other hand, suppose that $S$ is a~minimum-cardinality vertex cover of $G(\varphi, p)$; and out of those, $S$ contains the fewest number of vertices of~$B$.
  Assume that $|S| \leq (2^k - 1)m - m'' + 2^{k-1}pn$, aiming to show that at least $m''$ clauses of $\varphi$ can be satisfied.

  For every variable $x_j$ of $\varphi$, each of the sets $B_+(x_j)$ and $B_-(x_j)$ is a set of twins in $G(\varphi, p)$, so $S$ either contains it fully or is disjoint from it; and moreover, $S$ must contain at least one of the sets $B_+(x_j)$ and $B_-(x_j)$ since the vertices of both sets are connected by a~complete bipartite graph.
  Suppose now that $S$ contains both sets $B_+(x_j)$ and $B_-(x_j)$.
  Since the variable $x_j$ appears at most $2p + 1$ times in $\varphi$, we can pick a~literal $\ell_j \in \{x_j, \neg x_j\}$ that appears at most $p$ times in $\varphi$.
  Consider now the following set $S^\star$:
  \[ S^\star = (S \setminus B(\ell_j)) \cup \{ a_i(s_1, \ldots, s_k) \in A \,\mid\, \ell_j \in \{s_1, \ldots, s_k\} \}. \]
  Observing that $S^\star$ is formed from $S$ by removing $B(\ell_j)$ and introducing all neighbors of $B(\ell_j)$ in $G(\varphi, p)$, we conclude that $S^\star$ is also a~vertex cover of $G(\varphi, p)$.
  Since $|B(\ell_j)| = \gamma$, $B(\ell_j) \subseteq S$ and $\ell_j$ appears at most $p$ times in $\varphi$, the cardinality of $S^\star$ is at most $|S^\star| \leq |S| - \gamma + 2^{k-1} p = |S|$.
  This contradicts the minimality of $S$.
  Therefore, for every variable $x_j$, $S$ contains fully one of the sets $B_+(x_j)$, $B_-(x_j)$ and is disjoint from the other.
  We obtain a~valuation $F$ of $\varphi$ by including in $F$, for every variable $x_j$, the literal $\ell_j \in \{x_j, \neg x_j\}$ such that $B(\ell_j) \subseteq S$.
  We aim to show that $F$ satisfies at least $m''$ clauses of $\varphi$.

  Note that $|S \cap B| = \gamma n$ and $|A| = (2^k - 1)m$ and so $|A \setminus S| \geq m''$.
  Moreover, $|A(C_i) \setminus S| \leq 1$ for each clause $C_i$ of $\varphi$; otherwise, we would have $a_i(s_1, \ldots, s_k), a_i(s'_1, \ldots, s'_k) \in A(C_i) \setminus S$, where $s'_j = \neg s_j$ for some $j \in [k]$.
  But then $s_j$ is connected to the vertices of $B(s_j)$ and $s'_j$ is connected to the vertices of $B(s'_j)$, and by the minimality of $S$, either of the sets $B(s_j)$, $B(s'_j)$ is disjoint from $S$; a~contradiction with the assumption that $S$ is a~vertex cover.
  Therefore, there exist at least $m''$ clauses $C_i$ of $\varphi$ such that $|A(C_i) \setminus S| = 1$.
  Observe that each such clause $C_i$ is satisfied by $F$.
  Indeed, suppose $C_i = \ell_1 \vee \ldots \vee \ell_k$ and let $a_i(s_1, \ldots, s_k)$ be the only element of $A(C_i) \setminus S$.
  Then $s_j \in \{\ell_j, \neg \ell_j\}$ for each $j \in [k]$ and $(s_1, \ldots, s_k) \neq (\neg \ell_1, \ldots, \neg \ell_k)$ by the construction of $G(\varphi, p)$, and $B(s_1) \cup \ldots \cup B(s_k) \subseteq S$ by the fact that $S$ is a~vertex cover.
  Thus $s_1, \ldots, s_k \in F$ by the construction of $F$ and so $F$ satisfies $\varphi$ (i.e., there is a~literal $s_j \in \{s_1, \ldots, s_k\}$ such that $s_j = \ell_j$).
\end{proof}

As an~immediate corollary, we get that:
\begin{corollary}
  \label{cor:formula-to-td}
  For all pairs of integers $k \geq 2$, $p \geq 1$ there exists a~polynomial-time algorithm that inputs a~$k$-CNF formula $\varphi$ with $n$ variables and $m$ clauses where every variable appears at most $2p + 1$ times,
  and outputs a~graph $H(\varphi,p)$ with $|V(H(\varphi,p))| = O(n)$ and the following property:
  If $m'$ is the maximum number of clauses that can be satisfied in $\varphi$, then
  \[
    \td(H(\varphi,p)) = 2(2^k - 1)m - m' + 2^k pn + 1.
  \]
\end{corollary}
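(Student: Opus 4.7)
The plan is to combine \cref{lem:formula-to-vc}, which bounds $\vc(G(\varphi,p))$ in terms of $m'$, with \cref{lem:vc-to-td}, which converts that vertex cover number into a treedepth by padding with three clique gadgets. The entire content of the corollary is bookkeeping: pick $\ell$ so that the ``offset'' $\ell+1$ from \cref{lem:vc-to-td} matches the required target, and then verify the size bound.

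First I would observe that $G(\varphi,p)$ is indeed tripartite with the partition $A\cup B_+\cup B_-$: by inspection of the edge construction, every edge either joins $B_+(x_j)$ to $B_-(x_j)$ or joins a vertex of $A$ to some vertices in $B_+\cup B_-$, so each of the three parts is independent. Assuming without loss of generality that $\varphi$ has at least one variable and at least one clause (the excluded cases are trivial to handle directly), all three parts are nonempty, so the hypothesis of \cref{lem:vc-to-td} on the tripartition is met.

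Next I would set
\[
\ell \;=\; (2^k-1)m + 2^{k-1}pn.
\]
By \cref{lem:formula-to-vc} we have $\vc(G(\varphi,p)) = (2^k-1)m-m'+2^{k-1}pn \leq \ell$ (since $m'\geq 0$), so \cref{lem:vc-to-td} applies and produces a graph $H(\varphi,p)$ with
\[
\td(H(\varphi,p)) \;=\; \vc(G(\varphi,p)) + \ell + 1 \;=\; \bigl((2^k-1)m-m'+2^{k-1}pn\bigr) + \bigl((2^k-1)m + 2^{k-1}pn\bigr) + 1,
\]
which simplifies to the claimed $2(2^k-1)m - m' + 2^k p n + 1$.

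Finally I would verify the bound $|V(H(\varphi,p))| = O(n)$. Counting the number of literal occurrences in $\varphi$ in two ways gives $km \leq (2p+1)n$, hence $m=O(n)$ since $k$ and $p$ are constants. Then $|V(G(\varphi,p))| = (2^k-1)m + 2\cdot 2^{k-1}pn = O(n)$, and $H(\varphi,p)$ adds $3\ell + 3 = O(n)$ further vertices, yielding $|V(H(\varphi,p))|=O(n)$. The construction is clearly polynomial time. There is no real obstacle: all the substantive work lives in the two preceding lemmas, and the only choice to make is the value of $\ell$, which is forced by matching the target expression for $\td(H(\varphi,p))$.
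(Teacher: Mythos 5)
Your proposal is correct and matches the paper's proof: both apply \cref{lem:vc-to-td} to $G(\varphi,p)$ with $\ell = (2^k-1)m + 2^{k-1}pn$ and combine it with \cref{lem:formula-to-vc} to get the stated treedepth. The only cosmetic difference is that the paper justifies $\ell \geq \vc(G(\varphi,p))$ by noting that $A \cup B_+$ is a vertex cover, while you derive it from the formula of \cref{lem:formula-to-vc}; both are valid.
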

\begin{proof}
  Apply \cref{lem:vc-to-td} to the graph $G(\varphi, p)$ with vertex set $A \cup B_+ \cup B_-$ and the parameter $\ell = |A \cup B_+| = (2^k - 1)m + 2^{k-1}pn$ (the choice of $\ell$ comes from the fact that $A \cup B_+$ is a~vertex cover of $G(\varphi, p)$).
  The value of $\td(H(\varphi,p))$ follows from \cref{lem:vc-to-td,lem:formula-to-vc}.
\end{proof}

We are now almost ready to show the approximation hardness of {\sc Treedepth} (\cref{thm:td-no-ptas}).
We start from the following approximation hardness of the maximization variant of \textsc{2-SAT}:

\begin{theorem}[{\cite[Theorem 12]{Berman03b}}]
  \label{thm:2cnf-hardness}
  For every $\varepsilon > 0$, within the family of $m$-clause $2$-CNF formulas where each variable appears $3$ or $4$ times, it is \textsf{NP}-hard to distinguish between the formulas where at least $(1 - \varepsilon)m$ clauses are satisfiable and formulas where at most $\left(\frac{251}{252} + \varepsilon\right)m$ clauses are satisfiable.
\end{theorem}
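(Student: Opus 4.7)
The plan is to prove the stated gap hardness by combining a PCP-based inapproximability starting point with (i)~a gap-preserving gadget reduction from a constraint-satisfaction problem to MAX-2-SAT and (ii)~an expander-style amplifier that enforces a constant occurrence bound of 3 or 4 per variable while introducing only an additional $\varepsilon$ loss in the gap. This is the template developed in the Berman--Karpinski line of work from which \cite[Theorem~12]{Berman03b} is drawn, and it is the route I would follow.

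First, I would fix a tight gap hardness with \emph{unbounded} occurrences as the starting point, e.g.~H\aa stad's $(1-\varepsilon)$ vs.\ $(7/8+\varepsilon)$ inapproximability for MAX-3-SAT, or the $(1-\varepsilon)$ vs.\ $(1/2+\varepsilon)$ hardness for MAX-E3-LIN-2. Either yields instances with a constant YES/NO separation and admits density-controlled variants from which one can later enforce occurrence bounds.

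Second, I would compose this with a gap-preserving \emph{gadget reduction} that replaces each source constraint by a fixed bundle of 2-SAT clauses, in the style of the LP-optimal gadgets of Trevisan, Sorkin, Sudan, and Williamson. The analysis of such a gadget yields a deterministic multiplicative bound: each violated source constraint forces at least a prescribed number of unsatisfied 2-SAT clauses under any assignment. Consequently the YES/NO gap of the starting problem transfers into a quantifiable gap for MAX-2-SAT, though without yet any occurrence bound on variables. Third, I would invoke the Berman--Karpinski \emph{amplifier} to install the occurrence bound. For each variable $x$ appearing in $D$ source clauses, introduce $D$ fresh copies $x^{(1)},\ldots,x^{(D)}$, substitute one copy into each of $x$'s occurrences, and add equality-enforcing 2-SAT gadgets along the edges of an amplifier graph (a constant-degree structure with sufficient expansion) on the copies. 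Each equality decomposes into two 2-clauses, so each copy participates only in a bounded number of clauses, and a careful choice of amplifier degree pins that number to 3 or 4. The expansion property guarantees that any assignment disagreeing on a nontrivial fraction of copies of some variable violates a proportional number of equality clauses, so a NO instance of the unbounded-occurrence gap problem remains a NO instance after amplification, up to an absorbable $\varepsilon$ loss.

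The main obstacle is pinning down the exact constant $251/252$: it does not emerge from any single step in isolation but from a global optimisation that balances the gadget's violation ratio against the amplifier overhead (how many auxiliary clauses per variable, and how many of them an optimal assignment can simultaneously satisfy). Matching this precise ratio requires both an LP-tight 2-SAT gadget and an amplifier with sharp quantitative parameters, together with a careful accounting that converts the multiplicative losses of each stage into the stated additive gap on the clause-satisfaction fraction; the full numerical bookkeeping is carried out in the proof of Theorem~12 of \cite{Berman03b}.
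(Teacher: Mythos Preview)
The paper does not give its own proof of this statement: it is quoted verbatim as \cite[Theorem~12]{Berman03b} and used as a black box in the proof of \cref{thm:td-no-ptas}. So there is nothing in the paper to compare your proposal against.

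That said, your outline is a faithful high-level description of the Berman--Karpinski template (PCP-based starting gap, LP-optimal gadget reduction into MAX-2-SAT, expander/amplifier to bound occurrences), and you correctly identify that the specific constant $251/252$ comes from a joint optimisation over the gadget and amplifier parameters rather than from any single step. As a self-contained proof it remains a sketch---you defer the actual numerical accounting to \cite{Berman03b}---but for the purposes of this paper that is exactly what the authors do as well.
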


\begin{proof}[Proof of~\cref{thm:td-no-ptas}]
  Consider the family of $2$-CNF formulas where each variable appears $3$ or $4$ times.
  In this family, every $n$-variable, $m$-clause formula satisfies $2m \geq 3n$, or equivalently $n \leq \frac23 m$.
  By \cref{cor:formula-to-td}, every such formula $\varphi$ can be translated---in polynomial time---to a~graph $H(\varphi)$ with the property that if $m'$ is the maximum number of clauses satisfiable in $\varphi$, then $\td(H(\varphi)) = 6m - m' + 8n + 1$.

  Now let $\delta < \frac{1}{2604}$ be fixed.
  Then there is some $\varepsilon > 0$ such that, for large enough $m$ and $n \leq \frac23 m$,
  \[ (6m - (1 - \varepsilon)m + 8n + 1)(1 + \delta) < 6m - \left(\frac{251}{252} + \varepsilon\right)m + 8n + 1. \]
  So, letting $k = 6m - (1 - \varepsilon)m + 8n$, distinguishing between formulas $\varphi$ where at least $(1 - \varepsilon)m$ clauses are satisfiable and formulas where at most $\left(\frac{251}{252} + \varepsilon\right)m$ clauses are satisfiable reduces to distinguishing between graphs of treedepth at most $k$ and those of treedepth at least $(1 + \delta)k$.
  Hence, we obtain hardness by \cref{thm:2cnf-hardness}.
\end{proof}

We move to the hardness results under the Exponential Time Hypothesis (ETH); both presented proofs invoke the Sparsification Lemma of Impagliazzo, Paturi, and Zane~\cite{sparsification}.
We use this lemma in the following form:

\begin{lemma}[Sparsification Lemma {\cite{sparsification}}]
  \label{lem:sparsification}
  For every $0 < \varepsilon' < 0.1$ there exists a~constant $B > 0$ such that a~$3$-CNF formula $\varphi$ with $n'$ variables can be transformed in time $O(2^{\varepsilon' n'})$ into $s \leq 2^{\varepsilon' n'}$ $3$-CNF formulas $\varphi_1, \ldots, \varphi_s$ with the same set of variables such that
  \begin{enumerate}[label = (\roman*)]
   \item each variable appears at most $B$ times in each formula $\varphi_i$, and
   \item $\varphi$ is satisfiable if and only if one of the formulas $\varphi_i$ is satisfiable.
  \end{enumerate}
\end{lemma}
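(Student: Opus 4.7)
The plan is to prove the Sparsification Lemma via a recursive branching procedure on~$\varphi$. I would build a branching tree whose leaves are the output formulas $\varphi_1, \ldots, \varphi_s$: each internal node carries a current formula, and each branch commits either to satisfying a common core of literals or to falsifying one of those literals, ensuring that any satisfying assignment of $\varphi$ is captured by exactly one leaf.

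The key combinatorial tool is the sunflower lemma of Erd\H{o}s and Rado: any family of more than $k! \cdot a^k$ sets of size $k$ contains a sunflower with $a$ petals, i.e., $a$ sets sharing a common core $L$ and otherwise pairwise disjoint. I would apply this iteratively to the clauses of $\varphi$, processing clause sizes $k = 3, 2, 1$ in decreasing order. If for some $k$ the $k$-clauses of $\varphi$ contain a sunflower with $B$ petals and (nonempty) core $L$, the branching step is as follows: in one branch all literals of $L$ are set to true simultaneously, which removes every clause of the sunflower via unit propagation; in each of $|L| \leq 3$ other branches, a single literal of $L$ is falsified, shortening every petal by one literal. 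We recurse on each resulting formula. Once no clause size admits a sunflower of size~$B$, the number of occurrences of every variable is bounded by a function of~$B$, giving the required sparsity.

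The main obstacle will be bounding the number of leaves by $2^{\varepsilon' n'}$. I would introduce a potential function, such as $\mu(\varphi) = \sum_C \alpha^{|C|}$ for a carefully chosen $\alpha > 1$ depending on $\varepsilon'$, and show that each branching step decreases $\mu$ by a factor controlled by~$B$. The delicate point is that the ``falsify'' branches shorten clauses and thereby reintroduce work at smaller~$k$, while the ``satisfy'' branch removes many clauses at once; an amortized analysis peeling the clause sizes in order $k = 3, 2, 1$ is needed to control the accumulated branching factor. Setting $B$ sufficiently large as a function of $\varepsilon'$ would then yield the claimed bound on the number of output formulas and on the total running time.
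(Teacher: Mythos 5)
First, note that the paper does not prove this lemma at all: it is the Sparsification Lemma of Impagliazzo, Paturi, and Zane, imported verbatim as a black box via the citation \cite{sparsification}. So there is no in-paper argument to compare against; you are attempting to reprove a substantial standalone theorem whose known proof runs to several pages. Your sketch does point in the direction of that proof --- a branching tree whose nodes commit to satisfying or falsifying a common ``core'' shared by many clauses, with the leaves forming the output family --- but as written it has a genuine gap exactly where the theorem is hard. The entire content of the Sparsification Lemma is the bound $s \leq 2^{\varepsilon' n'}$ on the number of leaves, and your treatment of it is ``I would introduce a potential function \dots an amortized analysis \dots is needed.'' That is a statement of the obstacle, not a resolution of it. The naive potential $\mu(\varphi) = \sum_C \alpha^{|C|}$ does not obviously decrease multiplicatively per branching step in a way that charges against $n'$: the falsify-branches convert long clauses into short ones (increasing the count of small clauses without eliminating variables), and nothing in your sketch bounds how many branchings can occur before a variable is actually removed. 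The actual IPZ argument needs $\varepsilon'$-dependent thresholds on how many clauses of each length a variable may appear in, processed in a specific order, together with a careful induction showing the tree has few leaves; none of that machinery is present or replaced here.

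Two further points. The appeal to the Erd\H{o}s--Rado sunflower lemma is both unnecessary and not quite the right tool: the branching step (``core satisfied'' versus ``core falsified'') is sound for any family of clauses sharing a common subclause --- pairwise disjointness of the petals plays no role in its correctness --- and the existence of such a family when the formula is not $B$-sparse is immediate by pigeonhole (a literal occurring in more than $B$ clauses gives a flower with a singleton heart), whereas Erd\H{o}s--Rado gives no quantitative link to variable occurrences of the kind you would need. Separately, your branches (all of $L$ true; or $\ell_i$ false for some $i$) overlap, so the claim that every satisfying assignment is captured by \emph{exactly} one leaf is false as stated; this is harmless for item (ii), which only needs a disjunction, but it signals that the accounting of leaves --- again, the crux --- has not been thought through. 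For the purposes of this paper, the correct ``proof'' is the citation; if you do want to prove the lemma, the missing piece is the full leaf-counting induction, not the branching rule.
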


The Sparsification Lemma, when combined with ETH and \cref{cor:formula-to-td}, yields a~straightforward proof of \cref{thm:td-eth-lower-bound}:

\begin{proof}[Proof of \cref{thm:td-eth-lower-bound}]
  \cref{lem:sparsification} together with ETH implies that, for some absolute constants $\varepsilon', B > 0$, no algorithm solves $3$-CNF instances of the satisfiability problem on $n'$ variables, with each variable appearing at most $B$ times, in time $O(2^{\varepsilon' n'})$.
  Given such an~instance $\varphi$ with $n'$ variables and $m'$ clauses, we use \cref{cor:formula-to-td} with $k = 3$, $p = \left\lfloor \frac{B}{2} \right\rfloor$ to transform it, in polynomial time, into a~graph $H(\varphi)$ with $|V(H(\varphi))| \leq c \cdot n'$ for some fixed constant $c$ (where $n'$ is sufficiently large).
  Then $\varphi$ is satisfiable if and only if $\td(H(\varphi)) = (2^{k+1} - 3)m' + 2^kpn'$.
  Therefore, under ETH, the treedepth of an~$n$-vertex graph cannot be computed in time $O(2^{\varepsilon n})$ where $\varepsilon \coloneqq \varepsilon'/(2c)$.
\end{proof}

The approximation time complexity lower bound (\cref{thm:td-eth-approx-lower-bound}) requires a~bit more work:

\begin{proof}[Proof of \cref{thm:td-eth-approx-lower-bound}]
  Assuming ETH, there is some $\lambda > 0$ such that $n'$-variable $3$-SAT cannot be solved in time $O(2^{\lambda n'})$.
  Pick $0 < \varepsilon' < \min\{0.1, \frac{1}{2}\lambda\}$ and invoke the Sparsification Lemma (\cref{lem:sparsification}).
  Let the $3$-CNF formulas $\varphi$ and $\varphi_1, \ldots, \varphi_s$ be as in the statement of the lemma.
  Combining the Sparsification Lemma with a~polynomial-time \textsc{Satisfiability} inapproximability framework of H{\aa}stad~\cite{Hastad01} and a~quasi-linear-size construction of polynomially-checkable proofs (PCP) by Bafna, Minzer, Vyas, and Yun~\cite{BafnaMVY25}, we find absolute constants $0 < \alpha_1 < \alpha_2 < 1$ and $B^\star, c > 0$ such that we can translate each formula $\varphi_i$ in polynomial time to a~$3$-CNF formula $\varphi^\star_i$ with $n^\star \leq n' \log^c n'$ variables, each appearing at most $B^\star$ times, with the following property:
  If $\varphi^\star_i$ has $m^\star$ clauses, then:
  \begin{itemize}
    \item If $\varphi_i$ is satisfiable, then at least $\alpha_2 m^\star$ clauses of $\varphi^\star_i$ can be satisfied.
    \item If $\varphi_i$ is not satisfiable, then at most $\alpha_1 m^\star$ clauses of $\varphi^\star_i$ can be satisfied.
  \end{itemize}

  (This reduction is standard; see \cite[Lemma 2]{Bonnet15} for the proof of an~analogous result under the assumption of the existence of a~linear-size PCP construction.)
  Chaining this result with \cref{cor:formula-to-td}, we find absolute constants $0 < \beta_1 < \beta_2 < 1$ and a~polynomial-time algorithm that transforms each $\varphi^\star_i$ into a~graph $H_i$ with at most $dn^\star$ vertices, where $d$ is a fixed constant, such that:
  \begin{itemize}
    \item If $\varphi_i$ is satisfiable, then $\td(H_i) \leq \beta_1 |V(H_i)|$.
    \item If $\varphi_i$ is not satisfiable, then $\td(H_i) > \beta_2 |V(H_i)|$.
  \end{itemize}

  Now, let $\varepsilon \coloneqq \varepsilon'/d$ and $\delta = \beta_2 / \beta_1 - 1$.
  Then, supposing that a~$O(2^{\varepsilon n / \log^c n})$-time $(1+\delta)$-approximation algorithm for treedepth existed, the satisfiability of $\varphi$ could be decided in time
  \[
    O(2^{\varepsilon' n'}) + s \cdot ((n')^{O(1)} + O(2^{\varepsilon dn^\star / \log^c n^\star})) \leq O(2^{\varepsilon' n'}) + 2^{\varepsilon' n'} \cdot ((n')^{O(1)} + O(2^{\varepsilon' n'})) = O(2^{\lambda n'}),
  \]
  thus refuting the ETH.
\end{proof}

\bibliographystyle{plainurl}
\bibliography{main}

\end{document}